\newtheorem{problem}{\noindent\hspace{1em}\bf Problem}
\newcommand{\ssr}[4]{\left[ 
    \begin{array}{c|c}
      #1&#2\\ \hline 
      #3&#4
    \end{array}
  \right]}
\title{Sampled-Data $H^{\infty}$ Design of Coupling Wave Cancelers in\\
Single-Frequency Full-Duplex Relay Stations
}
\author{Masaaki Nagahara${}^{1\dagger \ddagger}$, Hampei Sasahara${}^{2\ddagger}$, Kazunori Hayashi${}^{3\ddagger}$, and Yutaka Yamamoto${}^{4\ddagger}$}
\abstract{
In this article, we propose sampled-data $H^\infty$ design of digital filters that cancel the
continuous-time effect of coupling waves in a single-frequency full-duplex relay station.
In this study, we model a relay station as a continuous-time
system while conventional researches treat it as a discrete-time system. 
For a continuous-time model, we
propose digital feedforward and feedback cancelers based on
the sampled-data control theory to cancel coupling waves taking intersample behavior into account.
Simulation results are shown to illustrate the effectiveness of the proposed method.
}
\keywords{%
wireless communication, coupling wave cancelation, sampled-data control, $H^{\infty}$ optimization.
}
\begin{document}

\maketitle


\section{Introduction}
\label{sec:intro}

In wireless communications, 
{\em relay stations} are used to relay radio signals
between radio stations that cannot directly communicate
with each other due to the signal attenuation.
On the other hand,
it is important to efficiently utilize the scarce bandwidth
due to the limitation of frequency resources
\cite[Chap.~1]{Gol}.
For this purpose,
{\em single-frequency network} is often preferable
in which signals with the same carrier frequency are transmitted through communication networks.
Then, a problem of {\em self-interference} caused by coupling waves arises 
in a full-duplex relay station in a single-frequency network \cite{Jai+11}.

Fig.~\ref{coupling} illustrates self-interference by coupling waves.
In this picture, radio signals with carrier frequency $f_1$ 
are transmitted from the base station (denoted by BS).
One terminal (denoted by T1) directly receives the signal from the base station,
but the other terminal (denoted by T2) 
is so far from the base station that they cannot communicate directly.
Therefore, a relay station (denoted by RS) is attached between them
to relay radio signals.
Then, radio signals with carrier frequency $f_1$
from the transmission antenna of the relay station
are fed back to the receiving antenna
directly or through reflection objects.
As a result, self-interference is caused in the relay station,
which may deteriorate the quality of communication
and, even worse, may destabilize the system.

For the problem of self-interference, 
adaptive methods have been proposed 
to cancel the effect of coupling waves:
a least mean square (LMS) adaptive filters
\cite{SakOkaHay06}, and adaptive array antennas 
\cite{NogHayKanSak12}.
In these studies, a relay station is modeled by
a discrete-time system, and the performance
is optimized in the discrete-time domain.
However, radio waves are in nature continuous-time
signals and hence the performance should be
discussed in the continuous-time domain.
In other words, one should take account of {\em intersample behavior}
for coupling wave cancelation.

In theory, if the signals are completely band-limited below the
Nyquist frequency,
then the intersample behavior can be restored 
from the sampled-data in principle \cite{Sha},
and the discrete-time domain approaches might work well.
However, the assumption of perfect band limitedness is
hardly satisfied in real signals;
real baseband signals are not fully band-limited
(otherwise they must be non-causal \cite[Chap.~1]{Skl}),
pulse-shaping filters, such as raised-cosine filters, do not act perfectly,
and the nonlinearity in electric circuits
adds frequency components beyond the Nyquist frequency.
One might think that if the sampling frequency is fast enough,
the assumption is almost satisfied and there is no problem.
But this is not true; firstly, the sampling frequency cannot be arbitrarily
increased in real systems, and secondly, even though the sampling
is quite fast, intersample oscillations may happen
in feedback systems \cite[Sect.~7]{Yam99}.

	\begin{figure}[t]
		\centering
		\scalebox{0.55}{\includegraphics{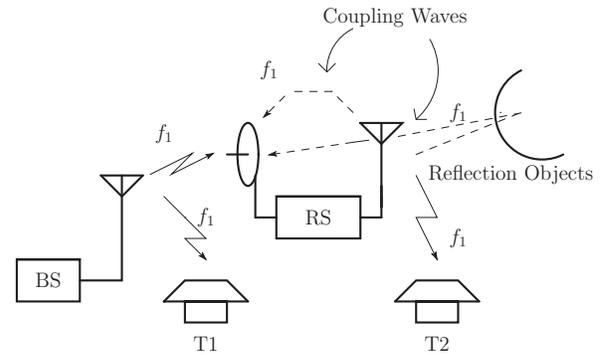}}
		\caption{Self-interference}
		\label{coupling}
	\end{figure}
To solve the problem mentioned above, 
we propose a new design method for coupling wave cancelation
based on the {\em sampled-data control theory}
\cite{CheFra,Yam99}.
We model the transmitted radio signals and coupling waves
as continuous-time signals, and optimize the 
worst case continuous-time error due to coupling waves
by a {\em digital} canceler.
This is formulated as a sampled-data $H^\infty$ optimal control problem,
which is solved via the fast-sampling fast-hold (FSFH) method
\cite{KelAnd92,YamMadAnd99}.
In this study, we consider two types of digital canceler:
feedforward and feedback cancelers.
For a feedforward canceler%
\footnote{A feedforward canceler was first reported in \cite{SasNagHayYam14}.},
we cancel self-interference by
a discrete-time (virtual) model of the coupling wave path
that is optimized via sampled-data $H^\infty$ optimization
\cite{YamNagKha12,NagYam13}.
For a feedback canceler,
we place a digital controller in the feedback loop
for stabilizing the feedback system as well as canceling
the self-interference.
This is formulated as a standard sampled-data $H^\infty$
control problem except for the time delay in the feedback loop,
which can be solved via FSFH as well.
Design examples are shown to illustrate the proposed methods.

The reminder of this article is organized as follows.
In Section \ref{sec:ff},
we formulate a design problem of feedforward cancelers.
In Section \ref{sec:fb}, we formulate a feedback canceler design problem
as a sampled-data $H^{\infty}$ optimal control problem,
which can be solved via FSFH approximation
described in Section \ref{sec:FSFH}.
In Section \ref{sec:sim}, simulation results are shown to illustrate the effectiveness of the proposed method.
In Section \ref{sec:conc}, we offer concluding remarks.

\subsection*{Notation}
Throughout this article, we use the following notation.
We denote by $L^2$ the Lebesgue space consisting of all square integrable real functions 
on $[0, \infty)$ endowed with $L^2$ norm $\|\cdot\|_{L^2}$,
and $\ell^2$ the space consisting of all square summable sequences,
with $\ell^2$ norm $\|\cdot\|_{\ell^2}$.
The symbol $t$ denotes the argument of time, $s$ the argument of Laplace transform and $z$ the argument of $Z$ transform.
These symbols are used to indicate whether a signal or a system is of continuous-time or discrete-time.
The operator $e^{-Ls}$ with nonnegative real number $L$ denotes continuous-time delay operator
with delay time $L$.
A continuous-time (or discrete-time) system $G$ with transfer function 
$C(sI-A)^{-1}B+D$ (or $C(zI-A)^{-1}B+D$) is denoted by
\[
G = \ssr{A}{B}{C}{D}.
\]

%
%
\section{Feedforward cancelers}
\label{sec:ff}
In this section, we formulate
the design problem of feedforward cancelers.

Fig.~\ref{outside} shows the block diagram of a relay station
using an amplify and forward relaying protocol
with a coupling wave path and a digital feedforward canceler.
	\begin{figure}[t]
		\scalebox{0.85}{\includegraphics{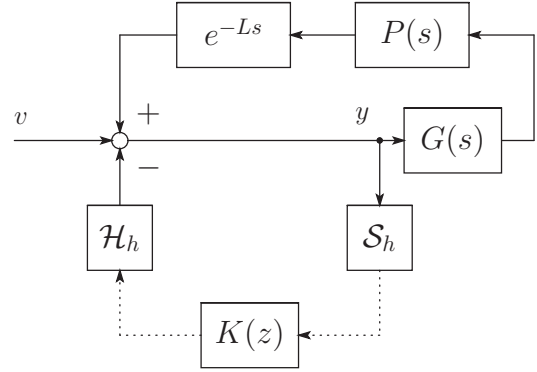}}
		\caption{Feedforward canceler}
		\label{outside}
	\end{figure}
In Fig.~\ref{outside}, 
continuous-time signals are represented in solid lines and
discrete-time signals in dotted lines.
We model the relay station by a continuous-time linear time-invariant (LTI) system
with transfer function $G(s)$.
The characteristic of the coupling wave path is also modeled by a
continuous-time LTI system denoted by $P(s)$ with time delay $e^{-Ls}$
where $L$ is a fixed nonnegative real number (i.e. $L$ is a fixed delay time).
Note that $P(s)$ will be time-varying in general due to the Doppler shift caused by the movement of reflection objects, however, the assumption of the LTI system is valid if the movement is slow or the signal components coming from the moving objects are not dominant.
The digital canceler consists of three operators:
ideal sampler ${\mathcal S}_h$ with sampling period $h>0$,
digital filter $K(z)$, and zero-order hold ${\mathcal H}_h$ with the same sampling period $h$.
The ideal sampler is
defined by
\[
\begin{split}
  {\mathcal S}_h: \{y(t)\} &\mapsto \{y_d[n]\}:
                 y_d[n] = y(nh),\\
                 n &= 0,1,2,\ldots,
\end{split}
\]
and the zero-order hold is defined by
\[
\begin{split}
 {\mathcal H}_h:  \{u_d[n]\} &\mapsto \{u(t)\}:
                 u(t) = u_d[n],\\
                  t &\in [nh,(n+1)h), \quad n=0,1,2,\ldots.
\end{split}
\]

From Fig.~\ref{outside}, we have
\begin{equation}
 y = v + (e^{-Ls}PG - {\mathcal H}_hK{\mathcal S}_h)y. \label{eqn1}
\end{equation}
To model the characteristic of the input signal $y$,
we introduce a subset $FL^2\subset L^2$ defined by
\[
 FL^2 := \{y=Fw: w \in L^2, ~\| w \|_{L^2}=1\},
\]
where $F$ is a continuous-time LTI system with 
real-rational, stable, and strictly proper transfer function $F(s)$.
The transfer function is a frequency domain weighting function
that gives the frequency characteristic of $y$ \footnote{
Although the assumption that $y\in FL^2$ is artificial,
we here consider a much wider class of signals
than the band-limited signal class assumed in
Shannon's theorem.
}.
Note that this signal model allows non band-limited signals
such as rectangular waves.
For any $y\in FL^2$, we try to {\em uniformly} minimize the error
\[
 \begin{split}
  e&:=(e^{-Ls}PG-{\mathcal H}_hK{\mathcal S}_h)y\\
  &=(e^{-Ls}PG-{\mathcal H}_hK{\mathcal S}_h)Fw
 \end{split}
\]

In other words, we minimize the $H^\infty$ norm
of the error system (see Fig.~\ref{error})
\[
 {\mathcal{E}}(K) := (e^{-Ls}PG-{\mathcal H}_hK{\mathcal S}_h)F,
\]
that is,
\[
 \inf_{K: {\rm stable}} \|{\mathcal{E}}(K)\|_\infty
 =\inf_{K: {\rm stable}}\sup_{\substack{w\in L^2\\ \|w\|_{L^2}=1}} {\|\mathcal{E}}(K)w\|_{L^2}.
\]
	\begin{figure}[t]
		\scalebox{0.9}{\includegraphics{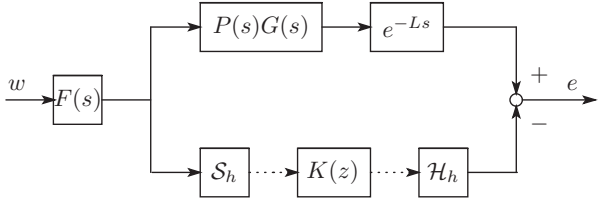}}
		\caption{Error system}
		\label{error}
	\end{figure}
The optimal discrete-time filter $K(z)$ can be obtained via
FSFH discretization \cite{KelAnd92,YamMadAnd99}.
For details of the design procedure, see  \cite{NagYam13}.

If we use the $H^\infty$ optimal filter $K(z)$ for the relay station 
that achieves sufficiently small $\|{\mathcal{E}}(K)\|_\infty$,
the effect of the coupling wave, $y-v$, is sufficiently
reduced.
In fact, we have the following theorem:
\begin{theorem}
	Assume $\|{\mathcal{E}}(K)\|_\infty\leq\gamma$ with $\gamma>0$.
	Then for any $y\in FL^2$ we have
	 $\|y-v\|_{L^2} \leq \gamma$.
\end{theorem}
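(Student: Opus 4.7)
The plan is to chain together three facts that are all already on the page: the error equation~\eqref{eqn1}, the definition of the signal class $FL^2$, and the definition of the error operator $\mathcal{E}(K)$. The argument should be essentially a one-liner, since the theorem is really just repackaging the induced-norm inequality for $\mathcal{E}(K)$ in terms of the physical error $y-v$.

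First I would rearrange the relation $y = v + (e^{-Ls}PG - \mathcal{H}_h K \mathcal{S}_h)y$ from \eqref{eqn1} to read $y - v = (e^{-Ls}PG - \mathcal{H}_h K \mathcal{S}_h)y$. Next, since by hypothesis $y \in FL^2$, I write $y = Fw$ for some $w \in L^2$ with $\|w\|_{L^2} = 1$, so that
\[
y - v = (e^{-Ls}PG - \mathcal{H}_h K \mathcal{S}_h)Fw = \mathcal{E}(K)w.
\]
Then I would take $L^2$ norms and invoke the definition of the induced ($H^\infty$) norm of the error system,
\[
\|y - v\|_{L^2} = \|\mathcal{E}(K)w\|_{L^2} \leq \|\mathcal{E}(K)\|_\infty \, \|w\|_{L^2} \leq \gamma,
\]
which is exactly the claim.

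There is really no obstacle here; the only thing to double-check is that the identity $y - v = \mathcal{E}(K)w$ indeed holds in the $L^2$ sense (i.e.\ that the closed-loop signal $y$ generated by the block diagram lies in $L^2$ so that applying the operator $\mathcal{E}(K)$ to $w$ is legitimate). Under the stability assumption on $K$ implicit in the feasibility condition $\|\mathcal{E}(K)\|_\infty \leq \gamma$, this is immediate, so no further work is needed.
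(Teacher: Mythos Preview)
Your proof is correct and follows essentially the same route as the paper: rewrite \eqref{eqn1} as $y-v=(e^{-Ls}PG-\mathcal{H}_hK\mathcal{S}_h)Fw$ with $\|w\|_{L^2}=1$, then bound by $\|\mathcal{E}(K)\|_\infty\leq\gamma$. Your version is slightly more explicit in displaying the factor $\|w\|_{L^2}$ in the induced-norm inequality, but otherwise the arguments coincide.
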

\begin{proof}
	For any $y\in FL^2$, there exists $w\in L^2$ such that $y=Fw$ with $\|w\|_{L^2}=1$. 
	This and equation (\ref{eqn1}) give
	\[
	\begin{split}
	\|y-v\|_{L^2} &= \|(e^{-Ls}PG-{\mathcal H}_hK{\mathcal S}_h)y\|_{L^2} \\
			 &= \|(e^{-Ls}PG-{\mathcal H}_hK{\mathcal S}_h)Fw\|_{L^2} \\
			 &\leq \|{\mathcal{E}}(K)\|_\infty\\
			 &\leq \gamma.
	\end{split}
	\]
\end{proof}
This theorem motivates the proposed $H^\infty$ optimal design
for cancelation of coupling waves.

%
%
\section{Feedback cancelers}
\label{sec:fb}
The feedforward canceler shown in Fig.~\ref{outside}
works well when the gain of $G(s)$ is low.
If the gain of $G(s)$ is very high, the self-interference feedback loop
including the coupling wave path may become unstable.
Since the feedforward canceler design does not take the stability
into account, it cannot generally stabilize the feedback loop.
Therefore, we here consider a {\em feedback} canceler
to stabilize the feedback loop as well as reducing the effect of self-interference.
Fig.~\ref{model0} shows the block diagram of a relay station 
attached with a digital feedback canceler
${\mathcal{H}}_hK(z){\mathcal{S}}_h$.
The difference between this and the feedforward canceler
in Fig.~\ref{outside} is that the canceler is placed in the feedback loop.
	\begin{figure}[t]
		\scalebox{0.6}{\includegraphics{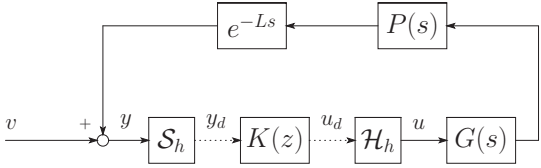}}
		\caption{Feedback canceler}
		\label{model0}
	\end{figure}

Our problem here is to
design the digital controller, $K(z)$, that stabilizes the feedback loop
and minimize the effect of self-interference, $z:=v-u$,
for any $v$.
We restrict the input continuous-time signal $v$ to the following
set:
\[
 WL^2 := \{v = Ww: w \in L^2, \|w\|_{L^2}=1\},
\]
where $W$ is a continuous-time LTI system with real-rational, stable,
and strictly proper transfer function $W(s)$.
The design problem is formulated as follows:
\begin{problem}
Design digital controller (canceler) $K(z)$ that stabilizes
the self-interference feedback loop and uniformly minimizes
the $L^2$ norm of the error $z=v-u$ for any $v\in WL^2$.
\end{problem}

This problem is reducible to a standard sampled-data $H^\infty$ control problem
\cite{CheFra,Yam99}.
To see this, let us consider the block diagram shown in Fig.~\ref{model}.
	\begin{figure}[t]
		\centering
		\scalebox{0.55}{\includegraphics{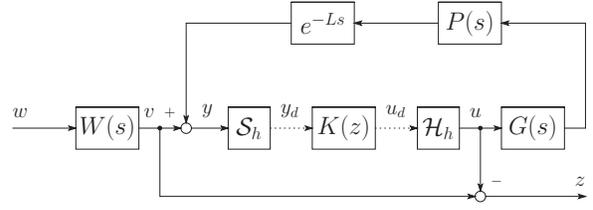}}
		\caption{Block diagram for feedback canceler design}
		\label{model}
	\end{figure}
Let $T_{zw}$ be the system from $w$ to $z$.
Then we have
\[
 z = v - u = T_{zw}w
\]
and hence uniformly minimizing $\|z\|_{L^2}$ for any $v\in WL^2$ is equivalent to
minimizing the $H^\infty$ norm of $T_{zw}$,
\begin{equation}
 \|T_{zw}\|_\infty = \sup_{w\in L^2, \|w\|_{L^2}=1} \|T_{zw}w\|_{L^2}.
 \label{eq:hinf_Tzw}
\end{equation}
Let $\Sigma(s)$ be a generalized plant given by
\[
 \Sigma(s) = \begin{bmatrix}W(s)&-1\\ W(s) & e^{-Ls}P(s)G(s)\end{bmatrix}.
\]
By using this, we have
\[
 T_{zw}(s) = {\mathcal{F}}(\Sigma(s), {\mathcal{H}}_hK(z){\mathcal{S}}_h),
\]
where ${\mathcal{F}}$ denotes the linear-fractional transformation (LFT)
\cite{CheFra}. Fig.~\ref{geplant} shows the block diagram of this LFT.
Then our problem is to find a digital controller $K(z)$ that minimizes
$\|T_{zw}\|_\infty$.
\begin{figure}[t]
	\centering
	\scalebox{0.7}{\includegraphics{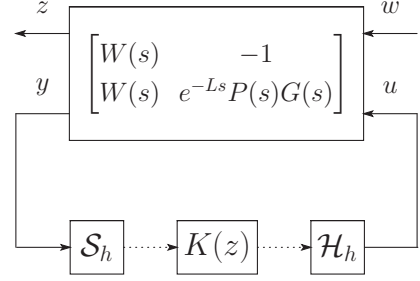}}
	\caption{LFT $T_{zw} = {\mathcal{F}}(\Sigma, {\mathcal{H}}_hK{\mathcal{S}}_h)$}
	\label{geplant}
\end{figure}
This is a standard sampled-data $H^\infty$ control problem,
and can be efficiently solved via
FSFH approximation.
We discuss this in the next section.

Note that if there exists a controller $K(z)$ that minimizes
$\|T_{zw}\|_\infty$, then the feedback system is stable
and the effect of self-interference $z=v-u$ is bounded by
the $H^\infty$ norm.
We summarize this as a theorem.
\begin{theorem}
Assume $\|T_{zw}\|_\infty \leq \gamma$ with $\gamma > 0$.
Then the feedback system shown in Fig.~\ref{model0} is stable,
and for any $v\in WL^2$ we have $\|v-u\|_{L^2}\leq \gamma$.
\end{theorem}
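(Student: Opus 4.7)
The plan is to follow the same template as the proof of the first theorem, but with two things to check rather than one: a stability claim and an $L^2$ bound. The bound is the easy part and the stability claim is the main thing that requires some comment. Specifically, once the theorem is restated in terms of the LFT $T_{zw}=\mathcal{F}(\Sigma,\mathcal{H}_hK\mathcal{S}_h)$ derived in the main text, the identity $z=v-u=T_{zw}w$ is already on the page, and so is the definition $WL^2=\{Ww:\|w\|_{L^2}=1\}$. That means everything one needs for the norm estimate is already assembled.

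First I would handle stability. The key observation is that hypothesising $\|T_{zw}\|_\infty\leq\gamma<\infty$ tacitly assumes that $T_{zw}$ is a well-defined bounded operator on $L^2$, and in the sampled-data $H^\infty$ framework of \cite{CheFra,Yam99} this already forces internal stability of the sampled-data feedback loop in Fig.~\ref{model0}: if $K(z)$ failed to stabilize the loop defined by $e^{-Ls}P(s)G(s)$ and $\mathcal{H}_hK(z)\mathcal{S}_h$, then some closed-loop map (in particular $T_{zw}$, since the disturbance $v$ enters the loop at the summing junction) would have poles on or outside the stability boundary and its $H^\infty$ norm would be infinite, contradicting the hypothesis. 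I would state this as a one-line appeal to the standard sampled-data $H^\infty$ machinery rather than reprove it.

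Next I would dispose of the norm bound by the exact same three-line calculation used for the feedforward theorem. Given $v\in WL^2$, pick $w\in L^2$ with $v=Ww$ and $\|w\|_{L^2}=1$; substitute into the identity $z=v-u=T_{zw}w$ derived just before the statement; then
\[
\|v-u\|_{L^2}=\|T_{zw}w\|_{L^2}\leq\|T_{zw}\|_\infty\|w\|_{L^2}\leq\gamma,
\]
which is the desired inequality.

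The only step that is more than routine is the stability assertion, and even that is standard once one accepts the sampled-data $H^\infty$ formulation. If a more self-contained argument were wanted, one could instead define the admissible class of $K(z)$ at the outset to be those digital controllers that internally stabilize the loop (as the companion feedforward result implicitly does via "$K:\text{stable}$"), in which case the stability conclusion becomes part of the hypothesis and only the $L^2$ estimate remains to be proved. Either way, no new calculation beyond the displayed chain above is required.
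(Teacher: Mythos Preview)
Your proposal is correct and matches the paper's own proof essentially line for line: the paper handles stability by the same contrapositive remark (instability would force $\|T_{zw}\|_\infty=\infty$) and then derives the bound via exactly your chain $\|v-u\|_{L^2}=\|T_{zw}w\|_{L^2}\leq\|T_{zw}\|_\infty\|w\|_{L^2}\leq\gamma$ after choosing $w$ with $v=Ww$ and $\|w\|_{L^2}=1$.
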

\begin{proof}
First, if the feedback system is unstable, then the $H^\infty$ norm becomes
unbounded.
Next, for $v\in WL^2$ there exists $w\in L^2$ such that 
$v=Ww$ and $\|w\|_{L^2}=1$.
Then, inequality $\|T_{zw}\|_\infty \leq \gamma$ gives
\[
 \|v-u\|_{L^2} = \|T_{zw}w\|_{L^2} \leq \|T_{zw}\|_\infty \|w\|_{L^2} \leq \gamma.
\] 
\end{proof}

%
%
\section{Fast-sample fast-hold approximation}
\label{sec:FSFH}
In this section, we review the method of FSFH approximation
for sampled-data $H^\infty$ optimal controller design.
The idea of FSFH is approximating a continuous-time $L^2$ signal
by a piecewise constant signal, which is generated by a fast hold
${\mathcal H}_{h/N}$ where $N$ is an integer greater than 2,
and evaluating the $L^2$ norm of an $L^2$ signal on the sampling points
generated by ${\mathcal S}_{h/N}$.
We call ${\mathcal S}_{h/N}$ and ${\mathcal H}_{h/N}$
a fast sampler and a fast hold, respectively.
That is, we connect the fast sampler and hold to the continuous-time
signals, $z$ and $w$, in Fig.~\ref{geplant} respectively, to make
a generalized plant with discrete-time input/output
as shown in Fig.~\ref{FSFHplant}.
	\begin{figure}[t]
		\centering
		\scalebox{0.65}{\includegraphics{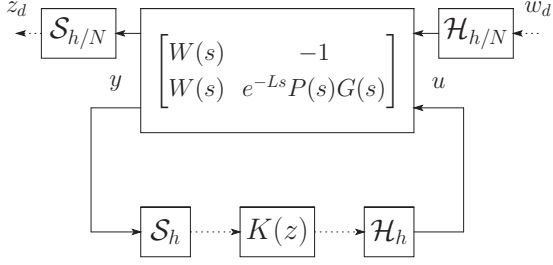}}
		\caption{FSFH approximation of $T_{zw}$}
		\label{FSFHplant}
	\end{figure}

Roughly speaking, if we take $N\rightarrow\infty$, then the frequency response
of the FSFH approximation ${\mathcal S}_{h/N}T_{zw}{\mathcal H}_{h/N}$
uniformly approaches that of sampled-data system $T_{zw}$,
see \cite{YamMadAnd99} for details.
Since the $H^\infty$ norm defined in \eqref{eq:hinf_Tzw} is equivalent to
the maximum gain of the frequency response gain of the sampled-data
system $T_{zw}$ (defined via lifting)~\cite[Chap.~13]{CheFra}, 
we can obtain an approximated solution of our
$H^\infty$ optimal control problem if we take a sufficiently large $N$.

The FSFH approximation of sampled-data $T_{zw}$
in FIg.~\ref{FSFHplant}
contains two sampling periods, $h$ and $h/N$,
and the whole system is periodically time-varying.
By using discrete-time lifting defined below,
the system can be equivalently converted to a
finite-dimensional discrete-time LTI system.
The discrete-time lifting is defined by
\[
 \begin{split}
 {\mathbf L}_N: 
&\{x[0],x[1],\ldots\}\\
&\quad \mapsto
\left\{ \begin{bmatrix}x[0]\\\vdots\\x[N-1]\end{bmatrix},
  \begin{bmatrix}x[N]\\\vdots\\x[2N-1]\end{bmatrix},\ldots
\right\},
\end{split}
\]
and its inverse by
\[
 \begin{split}
 {\mathbf L}_N^{-1}:
 &\left\{ \begin{bmatrix}x_1[0]\\\vdots\\x_N[0]\end{bmatrix},
  \begin{bmatrix}x_1[1]\\\vdots\\x_N[1]\end{bmatrix},\ldots
\right\}\\
 & \quad \mapsto \{x_1[0],\ldots,x_N[0],x_1[1],\ldots,x_N[1],\ldots\}.
 \end{split}
\]
By definition, discrete-time lifting ${\mathbf L}_N$ converts
a one-dimensional signal with sampling period $h/N$
to a $N$-dimensional signal with sampling period $h$.
Also, discrete-time lifting preserves the $\ell^2$ norm.
By using ${\mathbf L}_N$ and ${\mathbf L}_N^{-1}$,
we obtain a norm-equivalent discrete-time LTI system for 
time-varying ${\mathcal S}_{h/N}T_{zw}{\mathcal H}_{h/N}$.

Let $\mathbf{c2d}$ denote the step-invariant transformation \cite{CheFra},
that is,
\[
 \begin{split}
  {\mathbf{c2d}}\left(\ssr{A}{B}{C}{D},h\right)
    &:={\mathcal S}_{h}\ssr{A}{B}{C}{D}{\mathcal H}_{h}\\
  &=\ssr{e^{Ah}}{\int_0^h e^{At}Bdt}{C}{D},
 \end{split} 
\]
and $\mathbf{lift}$ denote the discrete-time lifting transformation \cite{CheFra},
that is,
\[
 \begin{split}
  &{\mathbf{lift}}\left(\ssr{A}{B}{C}{D},N\right)
  := {\mathbf L}_N\ssr{A}{B}{C}{D}{\mathbf L}_N^{-1}\\
  &\quad = \left[\begin{array}{c|cccc}
	 A^N & A^{N-1}B & A^{N-2}B & \ldots & B\\\hline
	 C & D & 0 & \ldots & 0\\
	 CA & CB & D & \ddots & \vdots\\
	 \vdots & \vdots & \vdots & \ddots & 0\\
	 CA^{N-1} & CA^{N-2}B& CA^{N-3}B & \ldots & D
    \end{array}\right].
 \end{split}
\]
Then we have the following theorem:
\begin{theorem}
	Assume that $L=mh+\frac{k}{N}h$ for some integers $m\geq 0$ and $k\in\{0,\ldots,N-1\}$. Then,
	for FSFH approximation ${\mathcal S}_{h/N}T_{zw}{\mathcal H}_{h/N}$,
	there exists a discrete-time LTI generalized plant $\Sigma_{dN}$ such that
	\[
	 \|{\mathcal S}_{h/N}T_{zw}{\mathcal H}_{h/N}\|_\infty = \|{\mathcal F}(\Sigma_{dN},K)\|_\infty,
	\]
	where the $H^\infty$ norm is defined by the $\ell^2$-induced norm.
	Moreover, norm-equivalent $\Sigma_{dN}$ is given by 
	\[
	  	\Sigma_{dN} := \begin{bmatrix}W_{dN}&-H_N\\ S_NW_{dN} & S_{N,k}z^{-m}P_{dN}G_{dN}H_N\end{bmatrix},
		\]
 where		
		\[
		\begin{split}
		W_{dN} &:= {\mathbf{lift}}\bigl({\mathbf{c2d}}(W,h/N),N\bigr),\\
		P_{dN} &:= {\mathbf{lift}}\bigl({\mathbf{c2d}}(P,h/N),N\bigr),\\
		G_{dN} &:= {\mathbf{lift}}\bigl({\mathbf{c2d}}(G,h/N),N\bigr),\\
		H_N &= [\underbrace{1,1,\ldots,1}_N]^\top,\quad
		S_N = [1,\underbrace{0,\ldots,0}_{N-1}],\\
		S_{N,k} &= [\underbrace{0 \ 0 \ \ldots \ 0}_{k} \ 1 \underbrace{\ 0 \ \ldots \ 0}_{N-k-1}].	
	  \end{split}
	\]

	\end{theorem}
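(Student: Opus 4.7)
The plan is to unwind the FSFH approximation $\mathcal{S}_{h/N}T_{zw}\mathcal{H}_{h/N}$ into a pure discrete-time LFT at rate $h$, so that the standard discrete-time $H^\infty$ machinery applies. Since $T_{zw}=\mathcal{F}(\Sigma,\mathcal{H}_h K\mathcal{S}_h)$, prepending the fast hold on the $w$-channel and appending the fast sampler on the $z$-channel only affects the exogenous ports of $\Sigma$, so
\[
\mathcal{S}_{h/N}T_{zw}\mathcal{H}_{h/N}=\mathcal{F}\bigl(\widetilde{\Sigma},\,\mathcal{H}_h K\mathcal{S}_h\bigr),
\]
where $\widetilde{\Sigma}$ has fast-rate discrete inputs/outputs on the exogenous side and continuous ports for the controller. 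The task reduces to rewriting $\widetilde{\Sigma}$ as an $N$-lifted LTI system at rate $h$.

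Next I would handle each continuous-time block in $\Sigma(s)$ separately. The blocks $W(s)$, $P(s)$, $G(s)$ become, after sandwiching by the fast sampler/hold pair, their step-invariant discretizations $\mathbf{c2d}(W,h/N)$, $\mathbf{c2d}(P,h/N)$, $\mathbf{c2d}(G,h/N)$ at rate $h/N$. For the controller interface, the slow hold $\mathcal{H}_h$ applied to a slow-rate signal produces a signal that, when viewed at the fast rate $h/N$, repeats each slow sample $N$ times; after lifting this is exactly the column $H_N=[1,\ldots,1]^\top$. Dually, the slow sampler $\mathcal{S}_h$ applied to a fast-rate signal picks the first entry of each block of $N$ fast samples, i.e., the row $S_N=[1,0,\ldots,0]$. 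Applying the discrete-time lifting $\mathbf{L}_N$ on the fast ports then turns each fast-rate LTI block into an $N$-dimensional block-LTI system at rate $h$ via the $\mathbf{lift}$ formula, producing $W_{dN}$, $P_{dN}$, $G_{dN}$.

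The main obstacle is the delay $e^{-Ls}$, and this is exactly what forces the hypothesis $L=mh+(k/N)h$. At the fast rate $h/N$ the delay corresponds to $mN+k$ sample shifts. After lifting by $N$, the $mN$ portion becomes $m$ whole-block shifts, yielding the scalar factor $z^{-m}$ commuting with the lifted blocks. The residual fractional shift of $k<N$ fast samples lives strictly inside one lifted block and cannot be absorbed into $z^{-m}$; however, when it is followed by the slow sampler $\mathcal{S}_h$, reading sample $k$ of a lifted block of length $N$ is equivalent to picking the $(k+1)$-st coordinate, which is precisely the action of the selector $S_{N,k}$. This justifies replacing the composition $\mathcal{S}_h e^{-Ls} P G \mathcal{H}_{h/N}$ in the delayed branch of $\Sigma$ by $S_{N,k} z^{-m} P_{dN} G_{dN}$ after lifting, while the delay-free branches keep only the plain $S_N$ or identity selectors.

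Assembling the four resulting entries yields exactly the $2\times 2$ block matrix $\Sigma_{dN}$ claimed in the theorem. Because discrete-time lifting is an $\ell^2$-isometry and the fast hold/sampler interfaces are preserved in passing to the lifted model, the $\ell^2$-induced norm is unchanged, so $\|\mathcal{S}_{h/N}T_{zw}\mathcal{H}_{h/N}\|_\infty=\|\mathcal{F}(\Sigma_{dN},K)\|_\infty$, completing the argument. Only the delay-decomposition step is non-routine; the remaining manipulations are the bookkeeping formalized by $\mathbf{c2d}$ and $\mathbf{lift}$ in Chen--Francis.
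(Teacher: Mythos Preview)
Your proposal is correct and follows exactly the paper's route: the paper's own proof is the single sentence ``The formulae are obtained by the method described in \cite[Chap.~8]{CheFra},'' and your sketch is simply a faithful unpacking of that lifting/FSFH bookkeeping, including the key decomposition of the delay $e^{-Ls}$ into $m$ whole-block shifts plus a $k$-sample selector.
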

	\begin{proof}
	 The formulae are obtained by the method described in \cite[Chap.~8]{CheFra}.
	\end{proof}
Note that $W_{dN}$, $P_{dN}$, and $G_{dN}$ are LTI.
Finally, our problem is reduced to a standard discrete-time $H^\infty$ control problem
with the LFT shown in Fig.~\ref{final plant}.
\begin{figure}[t]
	\centering
	\scalebox{0.8}{\includegraphics{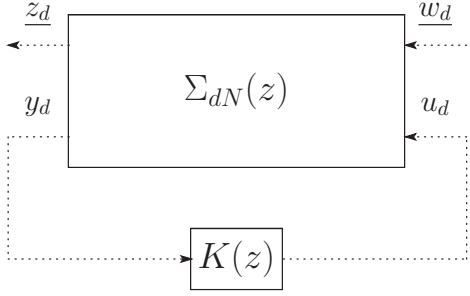}}
	\caption{Norm-equivalent discrete-time LTI system for FSFH approximation}
	\label{final plant}
\end{figure}
In this figure,
\[
  \underline{z_d}:= {\mathbf L}_Nz_d,\quad
  \underline{w_d}:= {\mathbf L}_Nw_d,\quad
  y_d := {\mathcal S}_h y,
\] 
and $u_d$ is the output of the controller $K(z)$.
Then the optimal controller $K(z)$ for this standard $H^\infty$ control problem
is easily obtained by using \texttt{hinfsyn} function in MATLAB Robust Control Toolbox.

%
%
\section{Simulation}
\label{sec:sim}
In this section, we show simulation results to illustrates the effectiveness of the proposed methods.

We assume that
sampling period $h$ is normalized to $1$.
The coupling wave path is modeled by
\[
	P(s) = \frac{0.25}{s+1},
\]
with time delay $L=1$, that is, the time delay is equal to the sampling period $h$.
The relay station is modeled by
$G(s) = 2.5$,
that is, the station amplifies input signals by 8 [dB].

For these parameters, we first design a feedforward canceler
proposed in Section \ref{sec:ff}.
We assume the frequency characteristic of input signals is given by
\[
 F(s) = \frac{1}{2s+1}.
\]
Note that the magnitude of $F(j\omega)$ represents the envelope of 
the spectra of the input signals (e.g. rectangular waves).
The discretization parameter for FSFH is set to $N=16$.
The obtained $H^\infty$-optimal $K(z)$ is of 18-th order.
With this filter, we simulate coupling wave canceling with a
periodic rectangular wave input with period $8h$.
Note that this signal contains frequency components beyond 
the Nyquist frequency, $\pi/h=\pi$ [rad/sec], although the frequency
of the periodic wave,
$\pi/8h=\pi/8$ [rad/sec] is much lower than $\pi$.

Fig.~\ref{fig:ff} shows the reconstructed signal $y$ (see Fig.~\ref{outside})
by the proposed feedforward canceler,
the input signal $v$, and the signal $y$ with no canceler.
\begin{figure}[t]
	\centering
	\includegraphics[width = 0.98\linewidth]{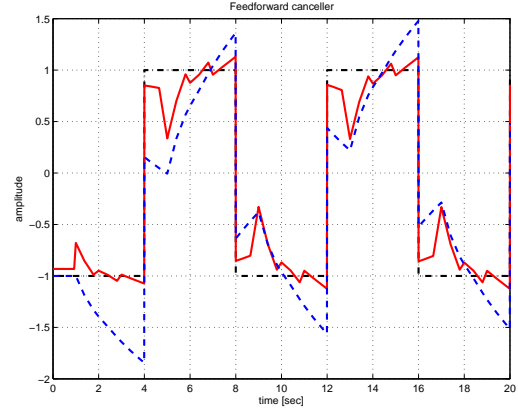}
	\caption{Feedforward cancelation: input signal (dash-dot line), 
	reconstructed signal $y$ by feedforward canceler (solid line), 
	signal $y$ with no canceler (dashed line)}
	\label{fig:ff}
\end{figure}
This result shows that the feedforward canceler works well.
To see this more precisely, we compute the effect of the coupling wave,
$|y(t)-v(t)|$, which is shown in Fig.~\ref{fig:ff_error}.
\begin{figure}[t]
	\centering
	\includegraphics[width = 0.98\linewidth]{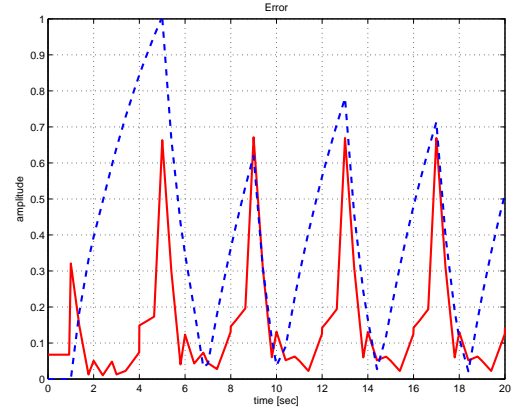}
	\caption{Coupling wave effect $|y(t)-v(t)|$
	by feedforward canceler (solid line) and
	with no canceler (dashed line)}
	\label{fig:ff_error}
\end{figure}
This result shows the proposed canceler well cancels the self-interference.

A drawback of the feedforward canceler is that it
never works if the gain of $G(s)$ is so high
that the feedback loop is unstable.
For example, if we take
$G(s) = 1000$,
that is, the relay station amplifies input signals by 60 [dB],
then the feedback loop becomes unstable.
For this situation, we adopt a feedback canceler proposed in Section \ref{sec:fb}.
The frequency characteristic $W(s)$ is assumed to be the same as $F(s)$,
that is, $W(s)=F(s)=1/(2s+1)$.
The other parameters are the same as those for the feedforward canceler design.
With FSFH discretization number $N=16$,
we compute the $H^\infty$-optimal $K(z)$ by the method
described in Section \ref{sec:FSFH}.

Fig.~\ref{fig:fb} shows the reconstructed signal $u$ in 
the feedback canceler (see Fig.~\ref{model0}).
\begin{figure}[t]
	\centering
	\includegraphics[width = 0.98\linewidth]{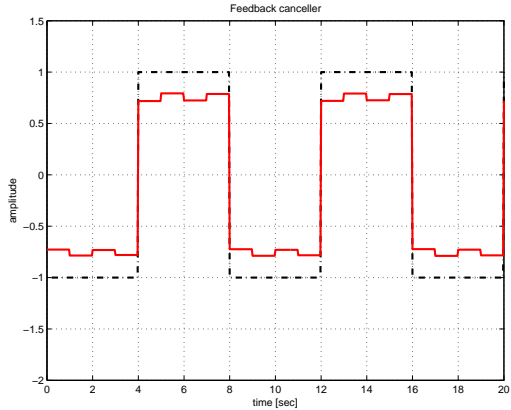}
	\caption{Feedback cancelation: input signal (dash-dot line) and 
	reconstructed signal $u$ by feedback canceler (solid line).}
	\label{fig:fb}
\end{figure}
Note that with any feedforward canceler,
the signal should diverge because the feedback loop
around the relay station itself is unstable.
On the other hand, the feedback canceler shows
small reconstruction error as shown in Fig.~\ref{fig:fb_error}.
\begin{figure}[t]
	\centering
	\includegraphics[width = 0.98\linewidth]{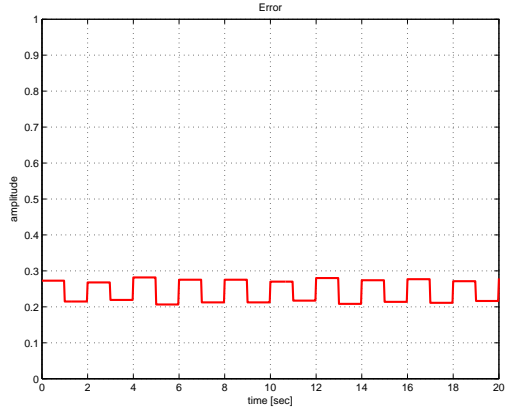}
	\caption{Coupling wave effect $|v(t)-u(t)|$
	by feedback canceler shown in Fig.~\ref{model0}.}
	\label{fig:fb_error}
\end{figure}

In summary, by the simulation results, sampled-data $H^\infty$ optimal design is 
proved to be effective
for coupling wave canceling.
%
%
\section{Conclusions}
\label{sec:conc}
In this article, we have proposed feedforward/feedback cancelers
based on the sampled-data $H^\infty$ control theory.
The design of feedforward cancelers is reduced to a 
sampled-data $H^\infty$ optimal discretization problem,
while that of feedback cancelers is formulated by
a standard sampled-data $H^\infty$ control problem.
They can be numerically solved by the FSFH method.
Simulation results have been shown to illustrates the effectiveness of the proposed 
feedforward/feedback cancelers.
Future work may include FIR (Finite Impulse Response) filter design as
proposed in \cite{NagYam14}, 
adaptive FIR filtering as proposed in
\cite{NagHamYam13}, 
robust filter design against uncertainty in the coupling wave path,
and implementation of the designed filter.

\hyphenation{KAKENHI}
\section*{Acknowledgment}
This work was supported in part by JSPS {KAKENHI} Grant Nos.\ 24360163, 24560457, 24560490,
24560543, and 26120521, and an Okawa Foundation Research Grant.


\end{document}